\newcommand{\z}{{\mathbb Z}}
\newcommand{\R}{\mathbb{R}}
\newcommand{\C}{\mathbb{C}}
\newcommand{\ML}{\mathcal{M}}
\newcommand{\Ex}{\mathbb{E}}
\DeclareMathOperator*{\corr}{corr}
\begin{document}

\title*{Multi-Asset Option Pricing with Exponential L\'evy Processes and the Mellin Transform}
\author{D.J. Manuge}
\institute{D.J. Manuge \at University of Guelph, 50 Stone Rd. E., Guelph, Canada. \email{dmanuge@uoguelph.ca}}
%
%
\maketitle

\abstract*{Exponential L\'evy processes have been used for modelling financial derivatives because of their ability to exhibit many empirical features of markets. Using their multidimensional analog, a general analytic pricing formula is obtained, allowing for the direct valuation of multi-asset options on $n \in \z^+$ risky assets. By providing alternate expressions for multi-asset option payoffs, the general pricing formula can reduce to many popular cases, including American basket options which are considered herein. This work extends previous results of basket options to dimensions $n \geq 3$ and more generally, to payoff functions that satisfy Lipschitz continuity.}

\abstract{Exponential L\'evy processes have been used for modelling financial derivatives because of their ability to exhibit many empirical features of markets. Using their multidimensional analog, a general analytic pricing formula is obtained, allowing for the direct valuation of multi-asset options on $n \in \z^+$ risky assets. By providing alternate expressions for multi-asset option payoffs, the general pricing formula can reduce to many popular cases, including American basket options which are considered herein. This work extends previous results of basket options to dimensions $n \geq 3$ and more generally, to payoff functions that satisfy Lipschitz continuity.}

\section{Introduction}
\label{intro}
Thorough empirical studies have shown that financial markets exhibit skewness, kurtosis, an absence of autocorrelation in price increments, finite variance, aggregational normality, and have an ability to change discontinuously \cite{tankov2003financial}. To capture these features, exponential L\'evy processes are chosen to represent the asset model driving option prices. In much of the related literature, the focus is on solving option prices in terms of asset log-prices. For example, it is well known that by using variable transformations in the Black-Scholes equation, the problem reduces to solving a diffusion equation. However, with the Mellin transform one can circumvent this and solve the partial differential equation (PDE) directly. Despite this, the Mellin transform has only recently been considered in a financial context. 

In 2002, Cruz-B\'aez and Gonz\'alez-Rodr\'iguez pioneered the method of using Mellin transforms to solve the associated PDE for a European call option \cite{zbMATH01780547}. In 2004, Panini and Srivastav use the Mellin transform method to solve for the European put, American put, European basket put ($n=2$ underlying assets), and American basket put ($n=2$) \cite{panini2004option, PaniniS04}. In all of these cases, volatility is assumed constant, dividends are omitted, and the underlying asset model is assumed to be geometric Brownian motion (GBM). By incorporating a general L\'evy process, the associated problem becomes a partial integro-differential equation (PIDE), and the solutions to the aforementioned models can be obtained as special cases.

Thus, the objective of this manuscript is to provide a general option pricing model in the context of Mellin transforms. We derive an analytic formula for multi-asset European and American options with $n \in \z^+$ underlying risky assets represented by multidimensional exponential L\'evy processes. Since the only restriction placed on the payoff function is Lipschitz continuity, many other well-known payoffs can be applied to the formula. As an application, an alternative expression for basket payoffs is given, enabling the reduction to popular option formulas such as the generalized Black-Scholes-Merton. However, numerous other special cases are attainable.

This manuscript is organized as follows. Section 2 introduces the neccessary mathematical terminology used to construct and solve the pricing formula. In section 3, the pricing PIDE is solved in the general case, while section 4 provides an application to American basket options.

\section{Preliminaries}
\label{prelim}
To state the asset pricing model and derive the associated option formula, we must introduce L\'evy processes and the Mellin transform. The following overview considers their multidimensional cases, which are necessary to simulate multiple underlying assets.
\subsection{L\'evy processes}
\label{levy}
A L\'evy process $L$ is a stochastic process with independent and stationary increments \cite{kyprianou2006introductory}. In the multidimensional case, it has a representation given by $L_t=(L_{t1},...,L_{tn})'$. Let $\bar{u} \in \R^n$, excess return $\bar{\mu} \in \R^n$, covariance matrix $\Sigma \in \R^{n \times n}$ be symmetric positive definite, and $\nu$ be a measure concentrated on $\R^n / \{0\}$.  A probability law $\eta$ of a real-valued random variable $L$ has characteristic exponent $\Psi(\bar{u}):= - \frac{1}{t} \log ( \Ex [e^{i\bar{u}L_t} ])$ such that,
\begin{align} \label{cf}
\Phi(\bar{u};t)= \displaystyle\int_{\R^n} e^{i \bar{u}' y} \eta(dy) = e^{- t\Psi(\bar{u})} 
\end{align}
iff there exists a triplet $(\bar{\mu},\Sigma,\nu)$ such that,
\begin{align} \label{lkn}
\Psi (\bar{u})=-i   \bar{u}' \bar{\mu} + \frac{1}{2}  \bar{u}' \Sigma \bar{u} + \int\limits_{\R^n } (1-e^{i \bar{u}' y}+ i \bar{u}' y \mathds{1}_{(|y|<1)}) \nu(dy).
\end{align}
Equation \eqref{lkn} is known as the L\'evy-Khintchine formula \cite{kyprianou2006introductory}. Alternatively, L\'evy processes can be expressed by their L\'evy-It\^o decomposition, which extends naturally to higher dimensions as,
\begin{align} \label{itodec}
L_{ti} = \mu_i t + \sigma_i W_{ti} + \int\limits_0^t \int\limits_{|y| \geq 1} y \eta_{L_i}(ds,dy) + \int\limits_0^t \int\limits_{|y|<1} y(\eta_{L_i} - \nu_{L_i})(ds,dy)
\end{align}
for $1 \leq i \leq n$. The L\'evy measure $\nu$ of ${L}_t$ satisfies,
\begin{align}
\int_{\R^n} \min(1,y^2) \nu(dy) < \infty
\end{align}
with $\nu(\{0\})=0$. To incorporate correlation between the Brownian motions of the process, let ${\sigma} \in \R^{n \times n}$ be a diagonal matrix of volatilities $\sigma_i$ for $1 \leq i \leq n$ and ${\rho} \in \R^{n \times n}$ be a correlation matrix with $\rho_{ij}= \corr(dW_i,dW_j) \in [-1,1]$ such that $\Sigma=  {\sigma} {\rho} {\sigma}$.
\subsection{The Mellin transform}
In order to solve the associated PIDE for the option price, the multidimensional Mellin transform is required. For a thorough treatment of the transform and its properties, see \cite{brychkov,sneddon}.
\begin{definition} [Mellin Transform] \label{mel} Let $\bar{x}=(x_1,..,x_n)'$ and $\bar{w}=(w_1,..,w_n)'$. For a function $f(\bar{x}) \in \R^{n+}$ the multidimensional Mellin transform is the complex function, \footnote{$\bar{x}^{\bar{w}-1} d\bar{x}$ is treated as $ \prod_{j=1}^n x_i^{w_i-1} dx_i$.}
\begin{eqnarray}
 \hat{f}( \bar{w}) := \ML \lbrace f(\bar{x}) ; \bar{w} \rbrace= \int_{\R^{n+}} f( \bar{x}) \bar{x}^{\bar{w}-1} d\bar{x}.
\end{eqnarray}
\end{definition}
The largest domain in which $\hat{f}$ is analytic is known as the {\sl fundamental strip} and is often denoted by $ \langle \bar{a}^{0},\bar{a}^{\infty} \rangle$. Consider the inverse scenario; the Mellin transform of a function is known, and one wishes to recover the original function. For a function $\hat{f}(\bar{w}) \in \C^n$ it can be shown under general conditions that an inverse $f(\bar{x}) \in \R^{n+}$ not only exists, but is also unique (for a given fundamental strip).
\begin{theorem} [Mellin Inversion Theorem] \label{multimelinv} Let $\bar{w}=(w_1,..,w_n)'$, $\bar{x}=(x_1,...,x_n)'$, and $\hat{f}(\bar{w}) \in \C^n$ be analytic on $\gamma = \overset{n}{ \underset{j=1}{ \times}} \gamma_j$ defined by $\gamma_j = \{ a_j+ib_j : a_j \in \R,  b_j = \pm \infty \}$ with $a_j \in \Re(w_j)= \langle a_j^{0},a_j^{\infty} \rangle$. Suppose $\int_\R \hat{f}(\bar{a}+i\bar{b}) d\bar{b}$ is absolutely convergent. Then,
\begin{align}
f(\bar{x}) = \ML^{-1} \lbrace  \hat{f}(\bar{w}) ; \bar{x} \rbrace = (2 \pi i)^{-n} \displaystyle\int_\gamma \hat{f}(\bar{w}) \bar{x}^{-\bar{w}} d\bar{w}.
\end{align}
\end{theorem}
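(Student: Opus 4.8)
The plan is to reduce the $n$-dimensional inversion to $n$ applications of the classical one-dimensional Mellin inversion, and to establish the latter by transporting the Mellin transform onto the Fourier transform through an exponential change of variables. The product structure of the kernel, $\bar{x}^{\bar{w}-1} = \prod_{j=1}^{n} x_j^{w_j-1}$, together with the absolute-convergence hypothesis (which licenses Fubini's theorem), lets me treat the coordinates one at a time, so that the argument is essentially the one-dimensional one repeated $n$ times and then reassembled.

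For a single coordinate I would fix the real parts $a_j = \Re(w_j)$ inside the fundamental strip and substitute $x_j = e^{-t_j}$ in the defining integral of Definition \ref{mel}. A direct computation turns $\hat{f}(\bar{a}+i\bar{b})$, viewed as a function of $\bar{b}$, into the Fourier transform $(\F g)(\bar{b})$ of $g(\bar{t}) := f(e^{-t_1},\dots,e^{-t_n})\,e^{-\bar{a}'\bar{t}}$. The assumed absolute convergence of $\int_{\R}\hat{f}(\bar{a}+i\bar{b})\,d\bar{b}$ is exactly the integrability of $\F g$ needed to invoke the Fourier inversion theorem, which returns $g = \F^{-1}(\F g)$. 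Reversing the substitution through $t_j = -\log x_j$ produces $f(\bar{x})\,\bar{x}^{\bar{a}}$ on the left and the kernel $\bar{x}^{-i\bar{b}}$ on the right, so that after dividing by $\bar{x}^{\bar{a}}$ the integrand becomes $\hat{f}(\bar{a}+i\bar{b})\,\bar{x}^{-(\bar{a}+i\bar{b})}$.

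Finally I would change variables $w_j = a_j + i b_j$, under which $d b_j = d w_j / i$ and the vertical line $\Re(w_j)=a_j$ becomes $\gamma_j$. The $n$-fold product of the $1/(2\pi)$ constants from Fourier inversion combines with the $i^{-n}$ from the Jacobian to give the prefactor $(2\pi i)^{-n}$ and the domain $\gamma$. Analyticity of $\hat{f}$ on the strip guarantees, via Cauchy's theorem, that this value does not depend on the particular abscissae $a_j \in \langle a_j^{0}, a_j^{\infty}\rangle$, which is what makes the inverse well defined for a given fundamental strip.

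I expect the main obstacle to be the rigorous justification of the two interchanges of limiting operations: the Fubini step that decouples the coordinates, and the pointwise validity of Fourier inversion, under the stated hypotheses alone. The bare absolute-convergence assumption secures integrability of the transform but not automatically of $g$ itself, so one must either appeal to a form of Fourier inversion valid when only $\F g \in L^1$ (recovering $f$ at its points of continuity) or supplement the hypotheses with mild regularity on $f$; some care is also needed to confirm that every intermediate integral stays within the analyticity strip, so that Cauchy's theorem legitimately delivers independence of the contour placement.
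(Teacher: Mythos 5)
The paper does not prove this theorem at all: it is stated as background and delegated to the cited references \cite{brychkov,sneddon}, so there is no in-paper argument to compare against. Your proof is precisely the classical one found in those references --- the exponential substitution $x_j = e^{-t_j}$ transports the Mellin transform on the lines $\Re(w_j)=a_j$ onto the Fourier transform of $g(\bar{t}) = f(e^{-t_1},\dots,e^{-t_n})e^{-\bar{a}'\bar{t}}$, and Fourier inversion plus the Jacobian $dw_j = i\,db_j$ yields the $(2\pi i)^{-n}$ kernel and the contours $\gamma_j$ --- so it is the right argument, and doing the $n$-dimensional substitution in one stroke (as your $g(\bar{t})$ actually does) is cleaner than iterating one-dimensional inversions, which would need intermediate integrability at each Fubini step. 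The caveat you flag is genuine: the hypothesis $\int_\R |\hat{f}(\bar{a}+i\bar{b})|\,d\bar{b} < \infty$ gives $\F g \in L^1$, while $g \in L^1$ comes from the (implicit) absolute convergence of the defining Mellin integral on the fundamental strip, and even then Fourier inversion recovers $f$ only almost everywhere, or pointwise at continuity points; the theorem as stated elides this, as do many classical references. Contour independence via Cauchy's theorem, as you note, is what makes the inverse well defined for a given strip, and is also the step the paper silently relies on when it later shifts $\Re(w_j)$ to accommodate specific payoffs.
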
 
If $f(\bar{x})= \prod_{j=1}^n f_j(x_j)$, then $\hat{f}(\bar{w})= \prod_{j=1}^n \ML \lbrace f_j(x_j) ; w_j \rbrace$. This allows the well developed properties of the univariate transform to be used to solve the multidimensional case.

\section{Partial Integro-Differential Equation for Option Pricing}
\label{PIDE}
For $1 \leq i \leq n$ and risk-free rate $r>0$, consider the exponential L\'evy model $S_{ti}=S_{0i}e^{rt+L_{ti}}$ as the asset price process for a multi-asset option on a filtered probability space $(\Omega, \mathcal{F}, \mathbb{P})$. By the hypothesis of no-arbitrage, there exists a martingale measure $\mathbb{Q}$ equivalent to $\mathbb{P}$ \cite{tankov2003financial}. Since the discounted asset $(e^{-rt} \bar{S}_t)_{t \in [0,T]}$ is a martingale under $\mathbb{Q}$, it follows that $e^{{L}_{t}}$ is also a martingale. This leads to the following condition on the L\'evy process of $\bar{S}_t$:
\begin{eqnarray} \label{drift}
\mu_i (\sigma_i, \nu) = -r- \frac{\sigma_i^2}{2} - \int\limits_{\R^n } (e^y-1-y \mathds{1}_{|y| <1} ) \nu_{L_i}(dy).
\end{eqnarray}
 European option prices with fixed maturity $T < \infty$ can be calculated from a discounted expectation of their payoff function $\theta(\bar{S})$ under $\mathbb{Q}$. i.e.
\begin{eqnarray} \label{europay}
V(\bar{S}, t)=\Ex_\mathbb{Q} \big( e^{-r(T-t)} \theta(\bar{S}(T)) | \bar{S}_t=\bar{S} \big).
\end{eqnarray}
\begin{theorem} [\protect{\cite[Theorem 4.2.]{15242739}}]
Let $L$ be a L\'evy process with state space $\R^n$ and characteristic triplet $(\bar{\mu}, \Sigma, \nu)$. Assume that the function $V(\bar{S}, t)$ in \eqref{europay} satisfies $V(\bar{S},t) \in C^{2,1} \big( \R^{n+} \times (0,T) \big)  \cap C^0 ( \R^{n+} \cup \{0\} \times [0,T] \big)$. Then, $V(\bar{S}, t)$ is a classical solution of the backward Kolmogorov equation:
\begin{eqnarray} \label{kolx} \nonumber
\frac{\partial V}{\partial t} &+ \frac{1}{2} \displaystyle\sum_{\substack{ i,j=1 \\ i \neq j}}^n \rho_{ij} \sigma_i \sigma_j S_i S_j  V_{S_i S_j} + \frac{1}{2} \displaystyle\sum_{i=1}^n \sigma_i^2 S_i^2 V_{S_i S_i}  + r \displaystyle\sum_{i=1}^n S_i  V_{S_i} - r V \\ &+ \displaystyle\int_{\R^{n} }  \bigg[V(\bar{S}e^{ y}) - V - \displaystyle\sum_{i=1}^n  (e^{ y_i}-1)S_i V_{S_i} \bigg] \nu(dy) =0
\end{eqnarray}
on $(0,T) \times \R^{n+}$ where $V(\bar{S}e^y) :=V(S_1e^{y_1}, ... , S_2 e^{y_2}, t)$ and the terminal condition is given by $V(\bar{S}, T)= \theta(\bar{S})$. \footnote{In the proof, $\theta$ is required to be Lipschitz continuous.}
\end{theorem}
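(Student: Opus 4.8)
The plan is to exploit the martingale characterization of the value process together with the multidimensional Itô formula for processes with jumps. First I would observe that, by the Markov property of $\bar{S}_t$ and the tower property of conditional expectation, the discounted value process
\begin{align}
M_t := e^{-rt} V(\bar{S}_t, t) = \Ex_{\mathbb{Q}} \big( e^{-rT} \theta(\bar{S}_T) \,\big|\, \F_t \big)
\end{align}
is a $\mathbb{Q}$-martingale. The entire content of the theorem is then the translation of this single fact into a pointwise PIDE: since a martingale has no drift, the finite-variation part of $M_t$ must vanish, and reading off its integrand gives \eqref{kolx}.

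To make this precise I would first write the dynamics of each $S_{ti} = S_{0i} e^{rt + L_{ti}}$ by inserting the L\'evy-It\^o decomposition \eqref{itodec} into the exponential and applying It\^o's formula for jump semimartingales. This produces a continuous Brownian part with volatility $\sigma_i S_{ti}$, correlations encoded through $\rho_{ij}$ via $\Sigma = \sigma \rho \sigma$, and a compensated jump part driven by the jumps $S_{ti^-}(e^{y_i} - 1)$. I would then apply the $n$-dimensional It\^o-L\'evy formula to the $C^{2,1}$ function $(\bar{S}, t) \mapsto e^{-rt} V(\bar{S}, t)$. Collecting the $dt$-terms gives exactly the operator appearing in \eqref{kolx}: the second-order terms $\tfrac{1}{2}\sum_{i \neq j}\rho_{ij}\sigma_i\sigma_j S_i S_j V_{S_iS_j} + \tfrac{1}{2}\sum_i \sigma_i^2 S_i^2 V_{S_iS_i}$ come from the quadratic variation of the Brownian part, the terms $\partial_t V + r\sum_i S_i V_{S_i} - rV$ come from the drift of $\bar{S}_t$ together with the discount factor, and the nonlocal term $\int_{\R^n}[V(\bar{S}e^y) - V - \sum_i (e^{y_i}-1)S_i V_{S_i}]\nu(dy)$ arises from compensating the jumps.

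The martingale drift condition \eqref{drift} is what makes these pieces line up. It is precisely the requirement that $e^{L_t}$ (equivalently $e^{-rt}\bar{S}_t$) be a $\mathbb{Q}$-martingale, so substituting it into the drift coefficient of $\bar{S}_t$ lets the jump compensator and the $\sigma_i^2/2$ It\^o correction combine to leave each asset drifting at the risk-free rate, which produces exactly the first-order term $r\sum_i S_i V_{S_i}$. Since $M_t$ is a martingale, its drift integrand must vanish for Lebesgue-almost every $t$ along the trajectory; because $V \in C^{2,1}(\R^{n+}\times(0,T))$ and the coefficients are continuous, I would upgrade this to the identity holding at every point of $(0,T)\times\R^{n+}$, which is \eqref{kolx}.

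The main obstacle is not the algebra but the analytic justification that the compensated jump integral is a genuine martingale and that the nonlocal integral converges — this is exactly where the Lipschitz continuity of $\theta$ noted in the footnote is needed. For small jumps I would Taylor-expand $V(\bar{S}e^y) - V - \sum_i(e^{y_i}-1)S_i V_{S_i} = O(|y|^2)$ and control it using $\int_{|y|<1}|y|^2\nu(dy) < \infty$; for large jumps the Lipschitz bound forces linear growth of $V$ with bounded first derivatives, so the integrand is dominated by an $e^{y_i}$-type term that is $\nu$-integrable over $|y|\geq 1$ precisely because the exponential integrability implicit in \eqref{drift} guarantees $\int_{|y|\geq 1}e^{y_i}\nu(dy)<\infty$. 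These two bounds give both the finiteness of the nonlocal term and the $L^2$ (hence true-martingale) property of the compensated jump integral, completing the argument.
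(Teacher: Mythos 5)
The paper itself offers no proof of this statement: it is imported wholesale by citation (Theorem 4.2 of the cited reference), with only a footnote recording that the proof there requires $\theta$ to be Lipschitz. Your argument --- the discounted value process $e^{-rt}V(\bar{S}_t,t)$ is a $\mathbb{Q}$-martingale by the Markov and tower properties, the It\^o--L\'evy formula decomposes it into local-martingale parts plus an absolutely continuous drift, the drift must vanish, and the small-jump Taylor bound $O(|y|^2)$ together with the large-jump Lipschitz/exponential-moment bound justifies the nonlocal integral --- is precisely the standard proof of this result in the cited literature, so your reconstruction is sound and consistent with what the paper relies on. One step deserves tightening: passing from ``the drift integrand vanishes $dt\otimes d\mathbb{Q}$-a.e.\ along the trajectory'' to ``the PIDE holds at every point of $(0,T)\times\R^{n+}$'' is not a consequence of continuity of $V$ and the coefficients alone; you also need the law of $\bar{S}_t$ to have full support in $\R^{n+}$ for each $t$, which does hold here because $\Sigma$ is assumed symmetric positive definite (nondegenerate Gaussian part), and that should be said explicitly. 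A second, cosmetic remark: the true-martingale ($L^2$) property of the compensated jump integral is more than you need --- since $M_t$ is a genuine martingale, it suffices that the stochastic-integral terms be local martingales, whence the finite-variation part is a continuous local martingale of finite variation and hence identically zero; this lets you avoid imposing second exponential moments on $\nu$.
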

Note that the above PIDE formulation only considers the European case. American options differ in that they can be exercised at any time $t<T<\infty$. It is known in the American case that a decomposition exists where the option value can be represented as a sum of a European option and an early exercise premium i.e. $V_A(\bar{S}, t)  = V_E (\bar{S}, t)+V_{EEP} (\bar{S}, t)$ \cite{jarrow,kimam}. Consider the ansatz that the solution to an inhomogenous PIDE \eqref{kolx} solves the American case. For notational simplicity, define $\mathcal{L}[V(\bar{S})]$ and let $f=f(\bar{S},t)$ so that the PIDE becomes,
\begin{eqnarray} \label{kols2}
\frac{\partial V}{\partial t}+\mathcal{L}[V(\bar{S})] =f ; \hspace{5mm} V(\bar{S}, T)=\theta(\bar{S}).
\end{eqnarray}
Let $V(\bar{S})$ be bounded when $\bar{S} \to \infty$ and $V(0, t) = Ke^{-r(T-t)}$. The solution to \eqref{kols2} is given by, 
\begin{theorem} \label{main} Let $\bar{S}=(S_1,...,S_n)'$, $\bar{w}=(w_1,...,w_n)'$, $0 \leq \tau \leq T$, and $0 < K, T, S_j < \infty$ for all $1 \leq j \leq n$. For Lipschitz payoff ${\theta}(\bar{S})$, the value of a multi-asset option $V(\bar{S},\tau)$ on $n$ exponential L\'evy-driven assets is given by,
\begin{eqnarray} \label{value}
V(\bar{S}, \tau) &=  e^{-r \tau} \ML^{-1} \Big\lbrace  \hat{\theta} \Phi(\bar{w}i, \tau) \Big\rbrace +  \ML^{-1} \Big\lbrace  \displaystyle\int_0^\tau \hat{f} \Phi(\bar{w}i, \tau-s) e^{-r(\tau-s)}  ds  \Big\rbrace 
\end{eqnarray}
where $\Phi(\cdot)$ is the characteristic function of the L\'evy process in \eqref{cf}.
\end{theorem}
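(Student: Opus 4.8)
The plan is to apply the multidimensional Mellin transform of Definition~\ref{mel} to the PIDE~\eqref{kols2}. Because the Mellin transform diagonalises the Euler-type scaling operators $S_i\partial_{S_i}$, the spatial operator $\mathcal{L}$ will become multiplication by an algebraic symbol $\Lambda(\bar w)$, collapsing~\eqref{kols2} into a first-order linear ODE in time for each fixed $\bar w$. I then solve that ODE explicitly and invert with Theorem~\ref{multimelinv}.

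First I would record the operational identities, all obtained by integration by parts with the boundary terms killed by the stated growth conditions ($V$ bounded as $\bar S\to\infty$ and $V(0,t)=Ke^{-r(T-t)}$): the first-order rule $\ML\{S_i\partial_{S_i}V;\bar w\}=-w_i\hat V$, the second-order rule $\ML\{S_i^2\partial_{S_i}^2 V;\bar w\}=w_i(w_i+1)\hat V$, and---applying the first-order rule in each coordinate for $i\neq j$---the cross rule $\ML\{S_iS_jV_{S_iS_j};\bar w\}=w_iw_j\hat V$. For the jump term the multiplicative shift $V(\bar Se^{y})$ is handled by the scaling law $\ML\{f(a\bar x);\bar w\}=a^{-\bar w}\hat f(\bar w)$, which turns it into $e^{-\bar w' y}\hat V$; after interchanging the $\nu$-integration with the Mellin integral the entire bracket in~\eqref{kolx} becomes $\hat V\int_{\R^n}\big[e^{-\bar w' y}-1+\sum_i w_i(e^{y_i}-1)\big]\nu(dy)$.

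Collecting all contributions gives $\ML\{\mathcal{L}[V];\bar w\}=\Lambda(\bar w)\hat V(\bar w)$, and the crux of the argument is the identification $\Lambda(\bar w)=-\big(r+\Psi(\bar w i)\big)$, where $\Psi$ is the L\'evy--Khintchine exponent~\eqref{lkn} evaluated at $\bar u=\bar w i$. This is where the martingale drift condition~\eqref{drift} enters: it is precisely what forces the $w_i$-linear coefficients of the diffusion, drift and compensator pieces to coincide with those of $\Psi(\bar w i)$. Granting this, and passing to the time-to-maturity variable $\tau=T-t$ so that the terminal datum becomes the initial condition $\hat V(\bar w,0)=\hat\theta(\bar w)$, equation~\eqref{kols2} becomes the scalar ODE $\partial_\tau\hat V+\big(r+\Psi(\bar w i)\big)\hat V=\hat f$ (up to the sign convention chosen for $f$). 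Solving by the integrating factor $e^{(r+\Psi(\bar w i))\tau}$ and recalling $\Phi(\bar w i,\tau)=e^{-\tau\Psi(\bar w i)}$ from~\eqref{cf} yields the homogeneous part $e^{-r\tau}\Phi(\bar w i,\tau)\hat\theta$ together with the Duhamel convolution $\int_0^\tau\hat f\,\Phi(\bar w i,\tau-s)e^{-r(\tau-s)}\,ds$; applying $\ML^{-1}$ termwise reproduces~\eqref{value}.

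The main obstacle I anticipate is analytic rather than computational. To run the argument one must produce a common fundamental strip $\langle\bar a^0,\bar a^\infty\rangle$ on which $\hat\theta$, $\hat f$ and $\hat V$ are simultaneously analytic and on which the inversion integral of Theorem~\ref{multimelinv} converges absolutely; this is exactly where Lipschitz continuity of $\theta$ is used, since it controls the behaviour of $V$ at $0$ and at $\infty$ and thereby guarantees a nonempty strip and the vanishing of every integration-by-parts boundary term. A related technical point is the Fubini interchange in the jump term: near $y=0$ the integrand is $O(|y|^2)$ and hence integrable since $\int\min(1,y^2)\,\nu(dy)<\infty$, but controlling $e^{-\bar w' y}$ and $e^{y_i}$ for large $|y|$ forces $\mathrm{Re}(\bar w)$ into a range carved out by the exponential moment conditions on $\nu$---so finiteness of the symbol $\Lambda(\bar w)$ is itself part of the definition of the strip.
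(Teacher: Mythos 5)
Your proposal follows essentially the same route as the paper's own proof: Mellin-transforming the spatial operator so that \eqref{kols2} collapses to a first-order linear ODE in time with symbol $Q(\bar w)=-\Psi(\bar w i)-r$ (identified via \eqref{drift} and \eqref{lkn}), recasting with $\tau=T-t$ so the terminal datum becomes an initial condition, handling the inhomogeneous term by Duhamel's principle, and inverting with Theorem \ref{multimelinv} and \eqref{cf}. The explicit Euler-operator identities, the integrating-factor computation, and the fundamental-strip/Fubini caveats you supply are details the paper's terse proof leaves implicit, but the argument is the same one.
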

\begin{proof}
Consider the homogenous case. Apply Definition \ref{mel} to $\mathcal{L}[V(\bar{S})]$ to obtain an expression for $\mathcal{L}[V(\bar{w})]$ where $\bar{w} = (w_1,...,w_n)'$ are complex Mellin variables. Since $\ML \lbrace V_t  ; \bar{w}  \rbrace= \hat{V}_t$ by independence, an expression for the multidimensional Mellin transform of the homogenous PIDE can be formed:
\begin{eqnarray} \label{sov}
\hat{V}_t(\bar{w},t)= -Q(\bar{w}) \hat{V}(\bar{w},t) \hspace{5mm} {\text{ where }} \hspace{5mm} Q(\bar{w})&= \frac{ \mathcal{L}[\hat{V}(\bar{w})] }{\hat{V}} = -\Psi(\bar{w}i)  - r
\end{eqnarray}
via \eqref{drift} and \eqref{lkn}. Using the initial condition and solving for \eqref{sov} yields $\hat{V}(\bar{w},t)=\hat{\theta}(\bar{w})e^{Q(\bar{w})(T-t)}$. Let $\tau=T-t$. \footnote{The problem must be recast as an initial value problem to use Duhamel's principle.}  Duhamel's principle solves the inhomogenous problem by considering the contribution of $f(\bar{S},s)$ when $s < t$. The result follows from \eqref{cf} and Theorem \ref{multimelinv}. \qed
\end{proof}

When $\hat{f}=0$, \eqref{value} is the value of a multi-asset European option. When $\hat{f} \neq 0$, \eqref{value} is the value of a multi-asset American option. Hence, the second term of $V(\bar{S},\tau)$ represents the early exercise premium. To implement this formula, one must know $(i)$ the Mellin payoff $\hat{\theta}(\bar{w})$, $(ii)$ the characteristic function (or exponent) of the L\'evy process, and $(iii)$ the Mellin transform of the exercise boundary $\hat{f}(\bar{w})$. The subsequent section considers these three components for American basket put options with an asset driven by GBM.

\section{Application to American Basket Options}
Consider the case where a put option on $n$ Brownian motion assets has an arithmetic basket payoff function given by $\theta_{P}(\bar{S})=\max (K-\sum_{j=1}^n S_j )=\big(K-\sum_{j=1}^n S_j )^+$. \footnote{An explicit expression for the put of a basket payoff in terms of Fourier transforms is obtained in \cite{HurdZ10} when $K=1$. With an appropriate change of variables ($\bar{u}=\bar{w}i$) their formula can be verified using Proposition \ref{basketput}. } One can prove the following alternate expression by induction and deduce the fundamental strip by the argument of its gamma functions.
\begin{proposition} \label{basketput}
Let $\bar{S}=(S_1,...,S_n)'$, $\bar{w}=(w_1,...,w_n)'$, $0 \leq \tau \leq T$, and $0 < K,T, S_j < \infty$ for all $1 \leq j \leq n$. For $\Re(w_j) > 0$, the put payoff function for a basket option is,
\begin{eqnarray} \label{bp}
\theta_{P}(\bar{S})=  \ML^{-1} \lbrace \hat{\theta}(\bar{w}) \rbrace = \ML^{-1} \bigg\lbrace  \frac{ \beta_n(\bar{w}) K^{1+ \sum \bar{w}}} {(\sum \bar{w}) (1 + \sum \bar{w})} \bigg\rbrace
\end{eqnarray}
where $\beta_n(\bar{w})$ denotes the multinomial beta function. \footnote{ $\beta_n(\bar{w}) = \frac{ \prod \Gamma(\bar{w})}{\Gamma (\sum \bar{w})}$ where $\Gamma(\cdot)$ denotes the complex gamma function.}
\end{proposition}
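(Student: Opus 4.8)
The plan is to establish \eqref{bp} by induction on the number of underlying assets $n$, evaluating the transform $\hat\theta(\bar w)=\int_{\R^{n+}}\big(K-\sum_{j=1}^n S_j\big)^+\prod_{j=1}^n S_j^{w_j-1}\,dS_j$ directly from Definition~\ref{mel}. For the base case $n=1$ the payoff is supported on $[0,K]$, so the integral collapses to $\int_0^K(K-S_1)S_1^{w_1-1}\,dS_1=K^{1+w_1}\big(\tfrac1{w_1}-\tfrac1{w_1+1}\big)=\frac{K^{1+w_1}}{w_1(1+w_1)}$, which agrees with \eqref{bp} since $\beta_1(\bar w)=\Gamma(w_1)/\Gamma(w_1)=1$.

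For the inductive step I would hold $S_n$ fixed and integrate over $S_1,\dots,S_{n-1}$ first. The structural observation is that $\big(K-\sum_{j=1}^n S_j\big)^+=\big((K-S_n)-\sum_{j=1}^{n-1}S_j\big)^+$, so on $0<S_n<K$ the inner integral is precisely the $(n-1)$-asset basket transform evaluated at strike $K-S_n$, while it vanishes for $S_n\ge K$. Applying the inductive hypothesis (which is stated for an arbitrary strike) replaces the inner integral by $\beta_{n-1}(w_1,\dots,w_{n-1})(K-S_n)^{1+W}/\big(W(1+W)\big)$ with $W:=\sum_{j=1}^{n-1}w_j$. The remaining one-dimensional integral $\int_0^K(K-S_n)^{1+W}S_n^{w_n-1}\,dS_n$ is an Euler Beta integral; the substitution $S_n=Ku$ turns it into $K^{1+\sum\bar w}\,\Gamma(w_n)\Gamma(2+W)/\Gamma(2+\sum\bar w)$.

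The crux of the argument is the cancellation that follows. Writing $\beta_{n-1}=\big(\prod_{j=1}^{n-1}\Gamma(w_j)\big)/\Gamma(W)$ and expanding $\Gamma(2+W)=(1+W)W\,\Gamma(W)$, the denominator $W(1+W)$ and the factor $\Gamma(W)$ cancel exactly, so the whole expression collapses to $\big(\prod_{j=1}^n\Gamma(w_j)\big)K^{1+\sum\bar w}/\Gamma(2+\sum\bar w)$; a final use of $\Gamma(2+\sum\bar w)=(1+\sum\bar w)(\sum\bar w)\Gamma(\sum\bar w)$ reconstitutes $\beta_n(\bar w)$ and the denominator $(\sum\bar w)(1+\sum\bar w)$, which is exactly \eqref{bp}. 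I expect this telescoping of the gamma factors to be the step demanding the most care, since it is what forces the result to close up into the multinomial beta function rather than an unwieldy product of gamma ratios.

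It remains to identify the fundamental strip, which I would read off from the convergence of the defining integral. Because the payoff has bounded support, there is no restriction as any $S_j\to\infty$; the only constraint comes from integrability of $S_j^{w_j-1}$ near the origin, which requires $\Re(w_j)>0$ for every $j$. This matches the analyticity of the gamma functions $\Gamma(w_j)$, whose poles lie at the nonpositive integers, confirming that $\hat\theta$ is analytic precisely on $\{\Re(w_j)>0\}$.
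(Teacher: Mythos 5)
Your proof is correct and follows exactly the route the paper indicates: the paper merely remarks that \eqref{bp} ``can be proved by induction'' with the fundamental strip read off from the gamma functions, and your argument---base case $n=1$, strike-shift $\big(K-\sum_{j=1}^{n}S_j\big)^+=\big((K-S_n)-\sum_{j=1}^{n-1}S_j\big)^+$, Euler Beta integral, and the telescoping $\Gamma(2+W)=(1+W)W\,\Gamma(W)$ that reconstitutes $\beta_n(\bar w)$---is precisely that induction carried out in full, with the strip $\Re(w_j)>0$ correctly deduced from integrability near the origin. No gaps; your writeup is in fact more complete than the paper's one-line justification.
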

Knowing the exercise boundary allows us to solve for the Mellin transform of $f$:
\begin{eqnarray} \label{eb}
\hat{f}(\bar{w})=\frac{- rK \beta_n(\bar{w})}{\sum \bar{w}}(S^*(s))^{\sum \bar{w}}
\end{eqnarray}
where $S^*$ denotes the {\em critical asset price}. Furthermore, since the characteristic function of a L\'evy process exists by \eqref{cf}, the characteristic function of returns $\Phi(\cdot)$ driven by an $n$-dimensional Brownian motion is known. Hence, an expression for the American basket put option on $n$ risky assets driven by GBM is obtained using $ \hat{\theta}(\bar{w})$ in \eqref{bp}, \eqref{eb}, and $\Phi(\cdot)$ in accordance with Theorem \ref{main}. By setting $n=2$ we obtain the American basket option formula derived in \cite{panini2004option, PaniniS04}. If $n=1$ we obtain the Black-Scholes-Merton American option \cite{kimam, kwok2008}. Similarly, an expression for the call option can be obtained by use of the generalized put-call parity relationship \cite{zbMATH01780547,tankov2003financial}.

\bibliographystyle{plain}
\bibliography{Refs}

\end{document}